\documentclass[preprint,12pt,3p]{elsarticle}
\usepackage{amsthm}
\usepackage{amssymb}
\usepackage{amsmath}

\newtheorem{theo}{Theorem}[]
\newtheorem{lem}[theo]{Lemma}
\newtheorem{rem}{Remark}

\newtheorem{algo}{Algorithm}[]

\begin{document}
\begin{frontmatter}

\title{Efficient Modular Arithmetic for SIMD Devices}
\author{Wilke Trei}
\date{October 04 2012}
\ead{wilke.trei@uni-oldenburg.de}
\address{Department of Mathematics, University of Oldenburg, 
26111 Oldenburg, Germany, Phone: +494417983219}

\begin{abstract}
This paper describes several new improvements of modular arithmetic and how to exploit them in order to gain more efficient implementations of commonly used algorithms, especially in cryptographic applications. We further present a new record for modular multiplications per second on a single desktop computer as well as a new record for the ECM factoring algorithm. This new results allow building personal computers which can handle more than 3 billion modular multiplications per second for a 192 bit module at moderate costs using modern graphic cards. 
\end{abstract}

\begin{keyword}
 Fast Modular Arithmetic, Improvements of Montgomery Reduction, Graphics Processing Unit, Factoring using Elliptic Curves
\end{keyword}
\end{frontmatter}

\section*{Introduction}
Parallelization of computational intensive algorithms has always been an important task in computational number theory. \\ This task becomes even more crucial during the last years, since the clock rate of ordinary processors stagnates. Therefore the chip vendors begin to rise the numbers of computational units on a single chip in order to keep the performance increases high. \\

Simultaneously graphic cards -- which classically have many computational units but lack of control flow units -- got more and more programmable. With the introduction of NVidia's CUDA technology and later the OpenCL platform, graphic cards came more and more into focus of programmers and security institutions, attracted by the high level of performance these chips may offer. \\

for efficient use of these chips, it is important to keep the chip's internal parallelization high because these devices often adapts one single instruction to multiple data (\emph{SIMD}). In this paper we present some algorithmic improvements to optimize modular arithmetic for this type of devices. This improvements are generic in the sense that they are neither specific for one special number theoretical algorithm, nor are they limited to SIMD use only. \\    
  
In order to test our improvements in practice, we applied them to an highly efficient version of Lenstra's ECM algorithm. Our implementation breaks the old record in terms of modular multiplication per second stated at Eurocrypt 2009 \cite{bernstein2008gpuecm} and SHARKS 2009 \cite{bernstein2009billionmulmod}. \\

We decided to use OpenCL for our implementation. Since OpenCL is a free standard and widely available, it is easy to keep compatibility with many soft- and hardware platforms.  A detailed description of OpenCL can be found in Section \ref{OpenCL_intro}. Descriptions of the ECM algorithm and our implementation are given in Section \ref{ecm_on_gpu}. 

\section{Heterogeneous Computing using OpenCL} \label{OpenCL_intro}
\subsection{Overview} 

OpenCL is an open programming model and standard for heterogeneous hardware platforms. Its first version was released in December 2008 by the Khronos Group \cite{khronos2008opencl} and is especially designed for parallel computations. The latest version of the standard was released in November 2011 \cite{khronos2011opencl}.\\

OpenCL is designed to offer a unified programming model for different hardware platforms. Like the NVidia CUDA platform OpenCL can for instance be used to program modern graphic cards. Furthermore it can be used to program ordinary x86 CPUs as well as IBM Cell Processors and several special purpose hardware. At the end of this section we give an overview on the commonly used OpenCL devices and its computational capabilities. We will describe the OpenCL programming platform briefly. An elaborate description can be found in \cite{gaster2011heterogeneous}. \\

The OpenCL platform model consists of two hardware components -- a \emph{host} and an \emph{OpenCL device} -- that may or may not be the same. While the host must be a directly programmable device like a CPU, the OpenCl device must not have a stand-alone functionality at all. An OpenCL device has its own memory location and offers several \emph{OpenCL Compute Units} (CU). A compute unit can be an SIMD (single instruction multiple data) computational unit, thus compute units are the finest granularity for control flow in the OpenCL platform. \\

A compute unit is furthermore the location of so called \emph{local memory} that can be used to share data among threads quickly. Every compute units may hold arbitrarily many \emph{stream cores} that are essentially arithmetical logical units (ALU). The stream core is the finest granularity for independent work threads, i.e. every stream core gets at least one ore more consecutive computational tasks. \\

A typical OpenCL program consists of two parts. The \emph{host code} is an ordinary program running on the host and is written in an arbitrary programming language. It binds the OpenCL libraries, loads work data and launches the so called \emph{OpenCL kernel}. An OpenCL kernel is a program that can be executed on a stream core. It describes what operations are applied to a single work item data. Kernels are usually executed in parallel on all available stream cores of an entire OpenCL device processing many work threads simultaneously. \\

The work items are normally grouped into so called \emph{work groups}. A single work group is atomically executed on one compute unit and can use the local memory of this unit to share data among its work items. To share data with all other threads the devices memory, also called \emph{global memory}, has to be used. \\

There are many different OpenCL devices on the market. The most commonly used are graphic cards. Table 1 gives a short overview of some common OpenCL devices and their computational capabilities. 
\begin{table}[h]  
\centering
\begin{tabular}{|l|c|c|c|}
\hline Vendor & Intel & NVidia & AMD \\ 
\hline Device & i7-3960X  & GTX 580 & HD 5870 \\
\hline Type & x86 CPU & GPU & GPU \\
\hline Compute Units & 6 / 12 & 16 & 20 \\ 
\hline Cores / CU & 1 & 32 & 16 \\ 
\hline Clock (MHZ) & 3300 & 1544 & 850 \\ 
\hline GFlops (SP) & 137.5 & 1581.1 & 2720 \\ 
\hline Global Memory & configurable  & 1 - 2 GB & 1 - 2 GB \\ 
\hline Local Memory & configurable  & 32 kbyte  & 32 kbyte \\ 
\hline Hardware Registers & 32 $\cdot$ 64 bit & 256 $\cdot$ 32 bit & 124 $\cdot$ 128 bit \\ 
\hline 
\end{tabular}
\caption{On Market OpenCL Devices} 
\end{table}
\subsection{Limitations}
The most common OpenCL devices are so called SIMD devices, i.e. devices that have many computational cores working on data in parallel, but sharing the instructional data and the control flow. There are several bottlenecks on this type of devices, especially on graphic cards. \\

One major aspect are the so called \emph{race conditions}. For example on a modern AMD graphic card at least 64 computational threads have to follow the same execution path independent of their data. Thus in case of branches, that are not coherent among all glued units, every occurring path of the branch has to be calculated sequentially while those units not taking this path remain idle. \cite[p. 135]{gaster2011heterogeneous}    \\

Another important aspect is the control of memory consumption and pressure on the memory bus system. While a single CPU has very few registers but several megabyte cache per compute core, a GPU has a lot of general purpose register space per thread available but only a very small cache, if any. For example one has 124 registers of 128 bit size available per thread for general purpose on a AMD Radeon HD 5000 series graphic card, but only constant data is cached within the L2 cache. \cite[appendix table D4]{amd2011guide} \\

There are two memory locations available for synchronizing work items. First of all there is the so called \emph{local memory} that offers fast access and high throughput. This local memory is placed within the compute unit and has a size of roughly 16-32 kb depending on the OpenCl device used. The local memory is designed to share data among all work items that belong to the same work group, i.e. are running on the same compute unit. \\

The other memory usable for synchronization is the global memory and can be up to a few gigabytes in size. This memory offers much slower access than local or register space, but can share data among all work items. Furthermore this memory location is the place where initial data and result data is stored. One crucial task in programming with OpenCl is to control the use of the global memory carefully, because it is easily going to be the bottleneck in any parallelized algorithm. \\

Even when all this limitations are considered, the programming itself is not as simple as on ordinary processors due to more architectural differences. This affects especially programming AMD graphic cards prior to the HD7000 series. On this cards every compute core itself is a vector processor able to handle up to 4 or 5 low level operations in parallel. For example on a HD5800 type card one core can perform a single integer multiplication and up to 4 independent integer additions in parallel. The need of splinting a single task into vectorized operations is one of the main difficulties when dealing with these graphic cards. In order to make programming more similar to ordinary CPU programming or working with NVidia graphic cards, AMD changed the architecture from the HD7000 series onwards to compute a single operation per core per clock. \cite[section 1.2]{amd2011guide} 

\section{Efficient Modular Arithmetic} \label{arithmetic}
\subsection{Common Modular Arithmetic} \label{common_arithmetic}
For most computationally difficult number theoretic algorithms it is needed to chain a lot of modular operations with fixed module. \\

Currently there exist two important algorithms for modular reduction using the fact that the modulus is fixed in most cryptographic applications, namely the Barret and the Montgomery reduction algorithms. \\

\begin{algo}{Barret Reduction \cite{barrett1987implementing}} ~ \\
Let $a,m \in \mathbb{N}$ with $a < m^2$ and $m$ be an odd integer of binary length $n = \lceil \mathrm{log}_2 \; m \rceil$. Furthermore, let $R = 2^n$ and $\mu = \lfloor \frac{R^2}{m} \rfloor$. The following algorithm computes $a \pmod{m}$. 
\begin{enumerate}
\item Calculate $r = a - \left\lfloor \lfloor \frac{a}{R} \rfloor \frac{\mu}{R} \right\rfloor m$.
\item Return $r-2m, r-m$ or $r$ depending which is in the range $[0,m[$.
\end{enumerate} 
\end{algo}

\begin{algo}{Montgomery Reduction \cite{montgomery1985modular} } ~\\ \label{montg_reduction}
Let $a,m \in \mathbb{N}$ with $a < m^2$ and $m$ be an odd integer of binary length $n = \lceil \mathrm{log}_2 \; m \rceil$. Furthermore, let $R = 2^n$ and $m' < R$ such that $m \cdot m' \equiv -1 \pmod{R}$. \\
The following algorithm computes $R^{-1} \cdot a \pmod{m}$. 
\begin{enumerate}
\item Calculate $b = a \cdot m' \pmod{R}$.
\item Calculate $r = \frac{a + b \cdot m}{R}$ over the integers.
\item If $r \geq m$ return $r-m$, else  return $r$.
\end{enumerate}

\end{algo}   

While the Barret Reduction gives the desired result immediately, the Montgomery Reduction is usually used with a modified residue system modulo $m$. In this case every element modulo $m$ is multiplied by $R$.  Using this transformation the modular addition is untouched and the multiplication can be done by multiplying $xR \cdot yR$ over the integers and then executing the Montgomery Reduction giving $xyR$.\\

Both reduction themselves cost at most $2M(n)$ if we define $M(n)$ to be the cost of a multiplication with input operands of size at most $n$ in terms of arithmetic operations. This claim holds, because the reduction modulo $R$ and the division operations are only binary representation cutoffs, and since $\mu$, $R$ and $m'$ can be pre-calculated. \\

Although the Montgomery reduction consumes more addition operations than the Barret reduction, we choose the latter algorithm for our implementation of modular reduction. This is especially due to the improvements provided in sections \ref{new_add} and \ref{new_mul}. For the rest of this paper we assume the size of the modulus $m$ is given by $n = \lceil \mathrm{log}_2 \; m \rceil$. \\

We recall that the cost of a modular multiplication is bounded by the cost of an ordinary multiplication of integers. Thus, it is crucial to know the integer multiplication methods when dealing with modular multiplication. \\

The classical schoolbook multiplication splits the input operands $a,b$ of size $n$ into two parts $a = a_1 \cdot 2^{\lceil \frac{n}{2} \rceil} + a_0, \; b = b_1 \cdot 2^{\lceil \frac{n}{2} \rceil} + b_0$ where $a_1, a_0, b_1, b_0$ are integers of size at most $\lceil \frac{n}{2} \rceil$. Then it performs the entire multiplication by calculating four products of half-size integers \begin{align*}
a b = a_1 b_1 2^{2\lceil \frac{n}{2} \rceil} + (a_1 b_0 + a_0 b_1) 2^{\lceil \frac{n}{2} \rceil} + a_0 b_0.
\end{align*}

This operation is used recursively until the machine word size is reached, i.e. the multiplications can be performed by the machine directly. The cost of this method is asymptotically  $\mathcal{O}(n^2)$. \\

With growing $n$ it becomes continuously harder to multiply in sufficiently short time. The first approach to decrease the complexity of the integer multiplication was due to A. Karatsuba and Y. Ofman \cite{karatsuba1963multiplication}. Based on the idea of Karatsuba and Ofman and the use of polynomial multiplication, A. Toom and S. Cook later developed a family of multiplication algorithms. The runtime of those algorithms depends on the family's parameter $k$ and a complexity class of $\mathcal{O}(n^\epsilon)$ can be obtained with $\epsilon$ arbitrarily close to $1$ for increasing $k$. The algorithm makes use of the evaluation homomorphism and the possibility to interpolate a polynomial of degree $k$ when $k+1$ points on its graph are known.

\begin{algo}{Toom-Cook-k \cite{cook1966minimum}} ~ \\ \label{toom-alg}
Let $a,b$ be integers of length $n$ and $k$ a fixed positive integer. Assume the binary representation of $a$ and $b$ is split into $k$ parts 
\begin{align*}
a &= \sum\limits_{i = 0}^{k-1} a_i \; 2^{i \cdot \lceil \frac{n}{k}\rceil} \\
b &= \sum\limits_{i = 0}^{k-1} b_i \; 2^{i \cdot \lceil \frac{n}{k}\rceil}.
\end{align*}
Then define the polynomials 
\begin{align*}
\bar{a} &:= \sum\limits_{i = 0}^{k-1} a_i \; x^{i} \in \mathbb{Z}[x] \\
\bar{b} &:= \sum\limits_{i = 0}^{k-1} b_i \; x^{i} \in \mathbb{Z}[x].
\end{align*}
Obviously $a = \bar{a}(2^{\lceil \frac{n}{k}\rceil})$ and $b = \bar{b}(2^{\lceil \frac{n}{k}\rceil})$ and due to the evaluation homomorphism $ab = (\bar{a} \cdot \bar{b})(2^{\lceil \frac{n}{k}\rceil})$. Since the product polynomial of $\bar{a}$ and $\bar{b}$ has degree $2k-2$, this product can be calculated as follows. \\

\begin{enumerate}
\item Select $2k - 1$ \emph{small} distinct integers $x_0, \hdots, x_{2k-2}$ and calculate the evaluations $y_i = \bar{a}(x_i)$ and $z_i = \bar{b}(x_i)$.
\item For all $0 \leq i < 2k-1$, calculate the products $w_i = y_i z_i$.
\item If the evaluation points $x_i$ are selected carefully, one can interpolate $\bar{a} \cdot \bar{b}$ from the known points $w_i = (\bar{a} \cdot \bar{b})(x_i)$ using linear algebra.
\item Evaluate $\bar{a} \cdot \bar{b}$ at $2^{\lceil \frac{n}{k}\rceil}$ to obtain the integer product of $a$ and $b$.
\end{enumerate}

\end{algo}

\begin{rem}{}
Usually the first one of the evaluation points in Algorithm \ref{toom-alg} is chosen to be $0$, so $w_0$ is essentially $a_0 b_0$. Furthermore one often assumes that the polynomials are evaluated at $x_{2k-2} = \infty$, i.e. their highest coefficients are multiplied. \\
Using this notation the Toom-Cook-2 algorithm with evaluation at $0, 1$ and $\infty$ is exactly the Karatsuba-Ofman algorithm.
\end{rem}

Algorithm \ref{toom-alg} divides $a$ and $b$ into $k$ parts and needs $2k-1$ multiplications beside several bit-shifts, multiplication with small constants etc. Thus the algorithm has an asymptotically complexity of $\mathcal{O}(n^{\operatorname{log}_k 2k-1})$ and hence for every $\epsilon > 1$, a complexity of $\mathcal{O}(n^\epsilon)$ can be achieved for sufficiently large $k$.  \\

Due to the overhead in steps 1,3 and 4 the Toom-Cook algorithms are only practical in a certain range. Usually after the inputs became 10-100 times the machine word size the arithmetic switches from schoolbook multiplication to the Karatsuba-Ofman algorithm. Later Toom-Cook-3 and Toom-Cook-4 become faster. Finally the Sch\"onhage-Strassen algorithm \cite{schoenhage1982asymptotically} -- that has currently fastest asymptotically complexity of an multiplication algorithm, i.e $\mathcal{O}(n \operatorname{log} n \operatorname{log} \operatorname{log} n)$ -- is more often used than Toom-Cook-k for $k$ exeeding $5$.

\subsection{Avoidance of Reduction Operations} \label{new_add}
As mentioned in the restrictions section it is necessary to avoid as many branches as possible and to keep the existing ones short. For modular arithmetic there is at least one barely avoidable branch dealing with reduction operations after over- or underflows. \\

On a SIMD device the branch times adds up, thus it does not matter if a reduction operation is performed every times when possible. The correct result simply can be selected afterwards. Therefore it is a common optimization to substitute the decision and reduction operations after additions or subtractions by the following algorithms.

\begin{algo}{Reduction after Addition} ~\\
Let $a$ be an integer with $0 \leq a < 2m$. Then the reduction of $a \pmod{m}$ can be computed by \begin{enumerate}
\item Calculate $a' = a-m$.
\item If $ a' < 0$ return $a$, else return $a'$.
\end{enumerate}
\end{algo}

\begin{algo}{Reduction after Subtraction} ~\\
Let $a$ be an integer with $-m \leq a < m$. Then the reduction of $a \pmod{m}$ can be computed by \begin{enumerate}
\item Calculate $a' = a+m$.
\item If $ a < 0$ return $a$, else return $a'$.
\end{enumerate}
\end{algo}

These algorithms have the advantage of very simple decisions depending only on a single bit. Furthermore they are simple to implement using OpenCl, since OpenCl offers very efficient selection operations and thus they cause only very short branches. \\

For our implementation we also use an observation on the Montgomery reduction that helps avoiding reduction operations after every multiplication step. 

\begin{lem}{Reduction Capabilities of the Montgomery Reduction} ~\\
Let $m \in \mathbb{N}$ be an odd integer of binary length $n = \lceil \operatorname{log}_2 \; m \rceil$ and $R, R'$ being powers of $2$ with $R \geq 4R' = 2^n$. Then the following observations can be made for the Montgomery Reduction algorithm ($\operatorname{redc}$) applied to $R$ and $m$. \begin{enumerate}
\item Let $a,b$ be integers with $a,b < 2R'$. Then $\operatorname{redc}(a \cdot b) \leq R' + m < 2 R'$.
\item Let $a,b$ be integers with $a,b < 3R'$. Then $\operatorname{redc}(a \cdot b) < \frac{13}{4} R'$
\end{enumerate}
\end{lem}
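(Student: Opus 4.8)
The plan is to reduce both inequalities to a single elementary estimate on the output of the Montgomery reduction that is valid for \emph{every} non-negative integer input, not only for inputs below $m^2$ as in Algorithm \ref{montg_reduction}. Writing $\operatorname{redc}$ for the Montgomery reduction attached to the power of two $R$ and the odd modulus $m$, I would first establish that for all integers $x \ge 0$
\begin{align*}
0 \;\le\; \operatorname{redc}(x) \;<\; \frac{x}{R} + m .
\end{align*}
Granting this, parts (1) and (2) only require substituting the hypotheses $x = a\cdot b < (2R')^2$, respectively $x = a\cdot b < (3R')^2$, and using $R \ge 4R'$ together with $m < R'$.

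To prove the displayed estimate I would run the three steps of Algorithm \ref{montg_reduction} with the (possibly enlarged) radix $R$, which is only assumed to be a power of two. Step 1 returns $u \equiv x m' \pmod{R}$ as a genuine reduction modulo $R$, so $0 \le u \le R-1$. In step 2, $x + u m \equiv x + x m' m = x(1 + m m') \equiv 0 \pmod{R}$ because $m m' \equiv -1 \pmod{R}$, so $r := (x + u m)/R$ is a non-negative integer, and $u \le R-1$ gives $r \le (x + (R-1)m)/R < x/R + m$. The conditional subtraction in step 3 either leaves $r$ untouched or, only when $r \ge m$, replaces it by $r - m \ge 0$; hence $0 \le \operatorname{redc}(x) \le r < x/R + m$. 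Crucially this derivation never invokes $x < m^2$: it uses only that $R$ is the modulus of the internal reduction and that $m m' \equiv -1 \pmod R$.

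With the estimate in hand, part (1) follows because $a,b < 2R'$ gives $x = ab < 4R'^2$, and $R \ge 4R'$ then forces $x/R < R'$, so $\operatorname{redc}(a\cdot b) < R' + m$; since $m < R'$ this is $< 2R'$. For part (2), $a,b < 3R'$ gives $x = ab < 9R'^2$, hence $x/R < 9R'^2/(4R') = \frac{9}{4}R'$, and therefore $\operatorname{redc}(a\cdot b) < \frac{9}{4}R' + m < \frac{9}{4}R' + R' = \frac{13}{4}R'$, again by $m < R'$.

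I expect no genuinely hard step. The two points that need care are: first, that the inputs $a\cdot b$ occurring here can reach $9R'^2 > m^2$, so the range part of Algorithm \ref{montg_reduction} cannot be cited as a black box and must be re-derived from the algorithm's internals as above; and second, the inequality $m < R'$ that closes both bounds — I would make sure the normalisation of $R'$ against the bit length of $m$ in the hypotheses really yields $m < R'$ (equivalently $R' + m < 2R'$), since creating exactly this slack is the purpose of working with the enlarged radix $R$.
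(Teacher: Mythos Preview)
Your argument is essentially the paper's own: it too bounds the step-2 output of $\operatorname{redc}$ by $(ab+um)/R<(ab+Rm)/R$ via the intermediate $u<R$, then uses $ab\le 4R'^2\le RR'$ to obtain $R'+m$, and declares the second claim analogous. Your flag on $m<R'$ is warranted --- read literally, ``$4R'=2^n$'' would give $R'=2^{n-2}<m$ and break $R'+m<2R'$; the intended normalisation (corroborated by the implementation paragraph right after the lemma, working two bits wider than the modulus) is $R'=2^n$, whence $m<2^n=R'$.
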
  

\begin{proof}
The product of $a$ and $b$ is less or equal to $4{R'}^2$ which is bounded by $R R'$. Due to the function principle of the REDC algorithm this implies that
\begin{align*} \operatorname{redc}(a,b) = \frac{m N + ab}{R} < \frac{R N + R R'}{R} = R' + N.
\end{align*}
The second claim can be proven analogously.
\end{proof}

The Lemma allows to build multiplication chains without any intermediate reduction. 
In order to use this improvement, we ensure in our implementation, that our total work width is at least two bit wider than our modulus. That is why our OpenCL implementation of the ECM algorithm is currently limited to use integers up to $2^{254}$ instead of $2^{256}$. Furthermore we decided to allow representatives between $0$ and $2^{log_2 \; m \; +1} -1$ instead of $0$ and $m$ where $m$ is our modulus. Hence it may be required to add $2m$ instead of $m$ after a subtraction. Since after nearly every subtraction a multiplication follows, it causes no disadvantage to add the precomputed $2m$. \\

In this fashion we are able to limit the \emph{Reduction after Addition/Subtraction} operations from total 19 to 11 executions within our algorithms main loop. 

\subsection{Faster Truncated Multiplication} \label{new_mul}

One key task of both reduction algorithms described in Section \ref{common_arithmetic} is to use truncated multiplications, i.e. to perform multiplications where only one half of the results binary representation is needed for further use. In order to make the reduction operations as cheap as possible, it is a natural attempt to use the fact that one half of the product may be easier to calculate. For the lower half multiplication result there are well known methods to compute it in less time than the time for the full product. \\

\begin{algo} ~\\ \label{lower_half_product}
Let $a,b$ be integers of length $n$ and let $\rho$ be a parameter in the interval $[0.5, 1]$. Then one can calculate $a \cdot b \pmod{2^n}$ the following way.
\begin{enumerate}
\item If $a = \sum\limits_{i=0}^{n-1} a_i 2^i$ and $b = \sum\limits_{i=0}^{n-1} b_i 2^i$ calculate $P_0 = \sum\limits_{i=0}^{\lceil \rho (n-1) \rceil} a_i 2^i \cdot \sum\limits_{i=0}^{\lceil \rho (n-1) \rceil} b_i 2^i$.
\item Calculate the lower halves of the products $P_1, P_2$ satisfying \begin{align*}
P_1 &= \sum\limits_{i= \lceil \rho (n-1) \rceil}^{n-1} a_i 2^i \cdot \sum\limits_{i=0}^{n-1- \lceil \rho (n-1) \rceil} b_i 2^i \\
P_2 &= \sum\limits_{i=0}^{n-1- \lceil \rho (n-1) \rceil} a_i 2^i \cdot \sum\limits_{i= \lceil \rho (n-1) \rceil}^{n-1} b_i 2^i.
\end{align*}
\item Calculate the sum $a \cdot b \equiv P_0 + (P_1 + P_2)2^{\lceil \rho n \rceil} \pmod{2^{\lceil n \rceil} }$
\end{enumerate} 
\end{algo}

The case $\rho = 1$ in Algorithm \ref{lower_half_product} is exactly the case where the full product is calculated during step 1 and afterwards is reduced mod $2^n$. The case $\rho = 0.5$ is the average case when dealing with the classical schoolbook multiplication. In this case $P_0$ is equal to $a_0b_0$ and $P_1$ respectively $P_2$ are the products $a_1 b_0$ and $a_0 b_1$. The optimal choice of $\rho$ depends on the efficiency of the underlying multiplication algorithm. \\

\begin{theo}{Optimal $\rho$ for Algorithm \ref{lower_half_product} \cite[Section 2]{mulders97oncomputing} } ~ \\
Given a multiplication algorithm with complexity $\mathcal{O}(n^\alpha)$, $\alpha \in \; ]1,2]$. Then the runtime $S(n)$ of Algorithm \ref{lower_half_product} is at most \begin{align*}
S(n) &= M(\rho n) + 2 S ((1-\rho) n) \leq C_\rho M(n), \; \text{with} \\
C_\rho &= \frac{\rho^\alpha}{1-2(1-\rho)^\alpha}.
\end{align*}
The minimum value of $C_\rho$ is reached at  \begin{align*}
\hat{\rho} = 1- 2^\frac{-1}{\alpha-1}.
\end{align*}
\end{theo}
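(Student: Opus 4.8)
\section*{Proof proposal}

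The plan is to prove the two claims separately: first establish and solve the recurrence to obtain the constant $C_\rho$, then minimise $C_\rho$ over the admissible range of $\rho$ by elementary calculus.

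For the recurrence I would simply track the cost of the three steps of Algorithm~\ref{lower_half_product}. Step~1 is one \emph{full} product of integers of bit length $\lceil\rho(n-1)\rceil\le\rho n$, costing $M(\rho n)$. Step~2 asks exactly for the lower halves of two products whose operands have length at most $(1-\rho)n$, i.e.\ two recursive calls to the same truncated-multiplication routine at size $(1-\rho)n$, costing $2\,S((1-\rho)n)$. Step~3 uses only additions and bit shifts, contributing an $\mathcal{O}(n)$ term that is absorbed into the notation. Hence $S(n)=M(\rho n)+2\,S((1-\rho)n)+\mathcal{O}(n)$. Modelling the multiplication cost by its dominant term $M(n)\sim c\,n^\alpha$ and inserting the ansatz $S(n)=C_\rho\,c\,n^\alpha$, the recurrence becomes $C_\rho=\rho^\alpha+2\,C_\rho(1-\rho)^\alpha$, which solves to $C_\rho=\rho^\alpha/\bigl(1-2(1-\rho)^\alpha\bigr)$; a short induction on the recursion depth then upgrades this to the stated inequality $S(n)\le C_\rho M(n)$, the $\mathcal{O}(n)$ tail being dominated because the subproblem sizes shrink geometrically.

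For the optimisation I would differentiate $C_\rho$ in $\rho$. Setting the numerator of $\frac{d}{d\rho}C_\rho$ to zero and using the quotient rule gives $\alpha\rho^{\alpha-1}\bigl(1-2(1-\rho)^\alpha\bigr)=\rho^\alpha\cdot 2\alpha(1-\rho)^{\alpha-1}$; dividing by $\alpha\rho^{\alpha-1}$ and regrouping, the terms collapse to $1=2(1-\rho)^{\alpha-1}\bigl((1-\rho)+\rho\bigr)=2(1-\rho)^{\alpha-1}$, so $(1-\rho)^{\alpha-1}=\tfrac12$ and hence $\hat\rho=1-2^{-1/(\alpha-1)}$. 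To confirm this is the minimiser I would note that $\rho\mapsto(1-\rho)^{\alpha-1}$ is strictly decreasing on $(1-2^{-1/\alpha},1)$, so $\hat\rho$ is the unique stationary point there; since $C_\rho\to+\infty$ as $\rho$ decreases to $1-2^{-1/\alpha}$ (where the denominator vanishes) while $C_1=1$ is finite, the unique stationary point must be the global minimum. One checks in passing that $\hat\rho\in[\tfrac12,1)$ for $\alpha\in{]1,2]}$ — with $\hat\rho=\tfrac12$ precisely at $\alpha=2$ and $\hat\rho\to1$ as $\alpha\to1^+$ — so the optimum lies in the interval allowed by the algorithm, and that $\hat\rho>1-2^{-1/\alpha}$, so the convergence condition used above indeed holds there.

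The calculus and the algebra are routine. The only step needing genuine care is the transition from the recurrence to the closed form: one must justify that $2(1-\rho)^\alpha<1$ (so the unrolled series converges and the $\mathcal{O}(n)$ terms are harmless) and handle the base case in which the operands reach machine-word size. I expect that bookkeeping — rather than any substantive difficulty — to be the main obstacle; the remainder follows the argument of Mulders \cite{mulders97oncomputing}.
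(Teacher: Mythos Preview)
Your argument is correct and is the standard derivation: set up the recurrence from the three steps, solve it by the ansatz $S(n)=C_\rho M(n)$ under $M(n)\sim c n^\alpha$, and then locate the stationary point of $C_\rho$ by a straightforward quotient-rule computation that collapses to $2(1-\rho)^{\alpha-1}=1$. Your side checks --- that $2(1-\rho)^\alpha<1$ is needed for the unrolling to converge, that the stationary point is unique and a genuine minimum, and that $\hat\rho\in[\tfrac12,1)$ for $\alpha\in{]1,2]}$ --- are all appropriate and correctly handled.

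There is nothing in the paper to compare against: the theorem is quoted with a citation to Mulders \cite[Section~2]{mulders97oncomputing} and no proof is given in the paper itself. Your write-up is precisely the argument one finds in that reference, so it is the right thing to supply here.
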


It is easy to see that for decreasing $\alpha$ the quantity $C_\rho$ of Algorithm \ref{lower_half_product} is tending to one. Thus a more efficient multiplication algorithm is less advantageous to calculating only half products. Table 2 gives some values for $\hat{\rho}$ and $C_\rho$ in the case of the multiplication algorithms discussed in Section \ref{common_arithmetic}. 
\begin{table}[h]
\centering
\begin{tabular}{|l|c|c|c|}
\hline  & Complexity Class & $\hat{\rho}$ & $C_\rho$ \\ 
\hline Schoolbook & $\mathcal{O}(n^2)$ & 0.500 & 0.500 \\ 
\hline Karatsuba-Ofman & $\mathcal{O}(n^{1.585})$ & 0.694 & 0.808 \\ 
\hline Toom-Cook-3 & $\mathcal{O}(n^{1.465})$ & 0.775 & 0.888 \\ 
\hline Toom-Cook-4 & $\mathcal{O}(n^{1.404})$ & 0.820 & 0.923 \\ 
\hline 
\end{tabular} 
\caption{$\hat{\rho}$ and $C_\rho$ for several multiplication algorithms}
\end{table}

Due to the missing carry bits from the lower half product, a method similar to Algorithm \ref{lower_half_product} is hard to achieve for the more significant bits of a multiplication. First attempts where to calculate some additional \emph{guard bits} to correct the error of missing carry bits \cite{hars2005fast}. \\

As an alternative Bentahar and Smart suggested an procedure called \emph{wooping} \cite{bentahar2007efficient}. The idea is that most calculations in the reduction algorithm are calculations over the integers and thus the error can be measured by doing analogous calculations modulo a small prime bigger than the maximal error. Using the idea of wooping one can obtain the same asymptotic speed for the calculation of high half truncated products as for the low half ones by costs of certain overhead. \\

For our purpose we found an astonishingly simple to use way to save operations in the calculation of the high half truncated product in step $2$ of the Montgomery Reduction algorithm. \\

In the setting of the Montgomery multiplication it is necessary to calculate the most significant bits of the integer $a+b \cdot m$. By proving the correctness of the Montgomery reduction algorithm one already knows that $b \cdot m \equiv -a \pmod{R}$ for any input number $a$. This fact can be used to save operations during the calculation of $b \cdot m$, since the lower half of its binary representation is already known. \\
We will demonstrate this advantage in the case of the schoolbook multiplication. \\

\begin{theo}{} ~ \\ \label{opt_schoolbook}
Let $n, M(n)$ be defined as before. Then the Montgomery Reduction algorithm for a modulus $m$ of size $n$ can be computed in $1.25 M(n)$ plus a few addition operations without the need of wooping or guard bits.
\end{theo}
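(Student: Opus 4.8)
The plan is to bound the cost of the two multiplication-heavy steps of Algorithm \ref{montg_reduction} separately; throughout, $M$ denotes the schoolbook cost, so that $M(\lceil n/2\rceil)=\tfrac14 M(n)$ and, by the optimal-$\rho$ bound with $\alpha=2$, computing a lower-half product of $n$-bit integers via Algorithm \ref{lower_half_product} with $\rho=\tfrac12$ costs $\tfrac12 M(n)$ plus $O(n)$ additions. For the first step, $b=a\,m'\bmod R$ with $R=2^n$, I would first write $a=a_H R+a_L$ with $0\le a_L<R$ and note $a\,m'\equiv a_L m'\pmod R$, so that $b$ is exactly the lower half of the product of the two $n$-bit integers $a_L$ and $m'$; this costs $\tfrac12 M(n)$ and yields all $n$ bits of $b$, which I split as $b=b_1 2^{\lceil n/2\rceil}+b_0$.

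For the second step I would exploit the identity $bm\equiv -a\pmod R$ used in the correctness proof of Montgomery reduction: the low $n$ bits of the integer $bm$ are known in advance to equal $L:=(-a_L)\bmod R$, which is free to write down. Splitting $m=m_1 2^{\lceil n/2\rceil}+m_0$, the schoolbook identity gives $bm=b_1 m_1\,2^n+S\,2^{\lceil n/2\rceil}+b_0 m_0$ with $S:=b_1 m_0+b_0 m_1$; writing $S=S_1 2^{\lceil n/2\rceil}+S_0$ with $0\le S_0<2^{\lceil n/2\rceil}$ and absorbing the overlap this becomes $bm=(b_1 m_1+S_1)2^n+(S_0 2^{\lceil n/2\rceil}+b_0 m_0)$, where the second summand is $<2^{n+1}$ and hence equals $c\,2^n+L$ for a single bit $c\in\{0,1\}$. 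Therefore $\lfloor bm/R\rfloor=b_1 m_1+S_1+c$, and since $b_0 m_0=c\,2^n+L-S_0 2^{\lceil n/2\rceil}$ must lie in $[0,2^n)$, the bit $c$ is forced: $c=1$ precisely when $S_0 2^{\lceil n/2\rceil}>L$, a single comparison. Hence the entire high half of $bm$ is obtained from just the three half-size products $b_1 m_1$, $b_1 m_0$, $b_0 m_1$ — the product $b_0 m_0$ is never formed — at a cost of $3\,M(\lceil n/2\rceil)=\tfrac34 M(n)$ plus $O(n)$ additions.

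It then remains to assemble the result. Since $bm\bmod R=L$, one has $a+bm=(a_H+\lfloor bm/R\rfloor)R+(a_L+L)$, and $a_L+L$ equals $0$ if $a_L=0$ and $R$ otherwise, so the Montgomery output is $r=a_H+b_1 m_1+S_1+c+[a_L\neq 0]$, followed by the single conditional subtraction of $m$ already present in Algorithm \ref{montg_reduction}. Summing the two steps gives $\tfrac12 M(n)+\tfrac34 M(n)=1.25\,M(n)$, with everything else a bounded number of $n$-bit additions, negations and comparisons, as claimed.

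The step needing the most care is the carry bookkeeping in the second part: one must verify that the single bit $c$ really captures all carries propagating into position $n$, that $c$ is uniquely pinned down by the known low half $L$ together with $S_0$ (this is what makes discarding $b_0 m_0$ legitimate), and that when $a_L$ is finally added back no surprise extra carry appears — i.e. that $a_L+S_0 2^{\lceil n/2\rceil}+b_0 m_0<3\cdot 2^n$ still has vanishing low $n$ bits, contributing a carry in $\{0,1,2\}$ that matches $c+[a_L\neq 0]$. The ceiling functions in the split change only the exact bit positions, not the operation count, so I would give the argument first for even $n$ and then remark that the odd case is identical.
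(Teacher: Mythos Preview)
Your proposal is correct and follows essentially the same route as the paper: both arguments spend $\tfrac12 M(n)$ on the lower-half product $b=a\,m'\bmod R$ and then exploit the correctness identity $bm\equiv -a\pmod R$ to eliminate the $b_0m_0$ sub-product, leaving the three half-size products $b_1m_1,\;b_1m_0,\;b_0m_1$ at cost $\tfrac34 M(n)$. The paper phrases the saving as ``recover $m_0b_0$ from the congruence $m_0b_0\equiv -(a+(m_1b_0+m_0b_1)2^{\lceil n/2\rceil})\pmod R$'' and leaves the carry propagation implicit, whereas you compute $\lfloor bm/R\rfloor$ directly and spell out the single carry bit $c$ and the $[a_L\neq 0]$ correction; this extra bookkeeping is exactly what makes the paper's terse argument rigorous, so the two are the same proof at different levels of detail.
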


\begin{proof}
Let the lower half product $b = a m' \pmod{R}$ already be calculated and let $m = m_1 2^{\lceil \frac{n}{2} \rceil} + m_0$ and $b = b_1 2^{\lceil \frac{n}{2} \rceil} + b_0$ be the split binary representations of $m$ and $b$ respectively. Then we know that
\begin{align*}
b \cdot m = m_1 b_1 2^{2\lceil \frac{n}{2} \rceil} + (m_1 b_0 + m_0 b_1) 2^{\lceil \frac{n}{2} \rceil} + m_0 b_0 \equiv -a \pmod{R}
\end{align*}
from the definition of the schoolbook multiplication and our observation. We assume \\ $R = 2^{2\lceil \frac{n}{2} \rceil}$ holds since $n$ is often ceiled to the next multiple of the machine word size, thus we can calculate
\begin{align*}
m_0 b_0 \equiv -(a + (m_1 b_0 + m_0 b_1) 2^{\lceil \frac{n}{2} \rceil}) \pmod{R}
\end{align*}
instead of multiplying $m_0b_0$ directly. This way one saves one of the four multiplications giving $0.5 M(n) + 0.75M(n)$ as approximate total runtime.
\end{proof}

This method is theoretically slower than wooping-based attempts for schoolbook multiplications. Since there is no need for an error correction it may be more practicable for small operand sizes. For growing $n$ one can even outperform whooping based attempts.

\begin{theo}{} \label{simple_opt}
Let the notation from Algorithms \ref{montg_reduction} and \ref{toom-alg} be given. Then one easily can calculate the full product in step 2 of Algorithm \ref{montg_reduction} using $2k-2$ instead of $2k-1$ sub-multiplications. 
\end{theo}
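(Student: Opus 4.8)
The plan is to lift the trick from the proof of Theorem~\ref{opt_schoolbook} to the level of the product polynomial that Toom--Cook--$k$ builds. Write $B:=2^{\lceil n/k\rceil}$, split $b$ and $m$ into $k$ limbs as in Algorithm~\ref{toom-alg}, and let $P:=\bar b\cdot\bar m=\sum_{i=0}^{2k-2}c_i\,x^{i}\in\mathbb Z[x]$, so that $b\cdot m=P(B)$. As in Theorem~\ref{opt_schoolbook} I would assume $R=B^{k}$ (this holds once $n$ is ceiled to a multiple of $k$, as $n$ is anyway rounded to a multiple of the machine word size). Then every monomial $c_iB^{i}$ with $i\ge k$ is divisible by $R$, and combining this with the Montgomery invariant $b\cdot m\equiv -a\pmod R$ gives
\begin{align*}
b\cdot m\;\equiv\;\sum_{i=0}^{k-1}c_i\,B^{i}\;\equiv\;-a\pmod R .
\end{align*}
This is one $\mathbb Z/R\mathbb Z$-linear relation among the low coefficients $c_0,\dots,c_{k-1}$; in particular $c_0$ is determined modulo $R$ by $a$ and $c_1,\dots,c_{k-1}$, and since $c_0=b_0m_0<B^{2}\le R$ it is thereby determined as an integer.

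Next I would simply \emph{drop the evaluation point $x=0$} from the $2k-1$ points of Algorithm~\ref{toom-alg}, keeping the remaining $2k-2$ points (including $x=\infty$, which still yields $c_{2k-2}=b_{k-1}m_{k-1}$ for free). The $2k-2$ sub-multiplications $w_j=\bar b(x_j)\bar m(x_j)$ give $2k-2$ $\mathbb Z$-linear equations in the $2k-1$ unknowns $c_0,\dots,c_{2k-2}$; because none of the retained finite points is $0$, the solution space of this underdetermined system is exactly one-dimensional and can be parametrised by $c_0$, i.e. each $c_i$ becomes an affine function $\alpha_ic_0+\beta_i$ of $c_0$. Substituting the expressions for $c_1,\dots,c_{k-1}$ into the displayed relation collapses it to a single linear equation $c_0\bigl(1+\sum_{i=1}^{k-1}\alpha_iB^{i}\bigr)\equiv -a-\sum_{i=1}^{k-1}\beta_iB^{i}\pmod R$. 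Solving this for $c_0$, then running the ordinary Toom--Cook interpolation and the final evaluation at $B$ (step~4 of Algorithm~\ref{toom-alg}), reconstructs $b\cdot m$ using $2k-2$ sub-multiplications plus additions and the usual small-constant operations.

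The step I expect to be the main obstacle is showing that the final scalar equation for $c_0$ is always solvable, i.e. that the coefficient $1+\sum_{i=1}^{k-1}\alpha_iB^{i}$ is a unit modulo the power of two $R$ (equivalently, odd). For the classical points $0,\pm1,\pm2,\dots,\infty$ this is a concrete check: the Lagrange coefficients $\alpha_i$ have bounded (small) denominators, so each term $\alpha_iB^{i}$ is an even integer as soon as the two-adic valuation of $B^{i}=2^{i\lceil n/k\rceil}$ exceeds that of the denominator — which holds whenever $n$ is not tiny — and the coefficient is then $1+(\text{even})$, hence odd. Writing this out cleanly, together with the bookkeeping for the degenerate cases (small $n$, or $k=2$ where $c_0$ may slightly exceed $2^{n}$ and one enlarges $R$ by a bit or carries a single guard bit, exactly as in the remark preceding Theorem~\ref{opt_schoolbook}), is the only part that is not routine; the rest is linear algebra over $\mathbb Z$ and $\mathbb Z/R\mathbb Z$.
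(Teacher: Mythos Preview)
Your argument is correct, but the route differs from the paper's in a way worth noting. You work modulo the full $R=B^{k}$: you express every $c_i$ as an affine function of $c_0$ via the $2k-2$ retained evaluations, substitute into $\sum_{i<k}c_iB^{i}\equiv -a\pmod R$, and then must invert the scalar $1+\sum_{i=1}^{k-1}\alpha_iB^{i}$ in $\mathbb Z/R\mathbb Z$, which is exactly the obstacle you flag. The paper instead exploits only the much weaker congruence modulo $B^{2}$: from $b\,m\equiv -a\pmod B$ one reads off $w_{0,L}:=c_0\bmod B$ directly; feeding $w_{0,L}$ together with the full $w_1,\dots,w_{2k-2}$ into the ordinary interpolation formula yields the low limb $l_0:=c_1\bmod B$; and then $c_0\equiv -a-l_0B\pmod{B^{2}}$ recovers $c_0$ exactly because $c_0=b_0m_0<B^{2}$. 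With $w_0=c_0$ in hand, the standard Toom--Cook interpolation finishes the job.

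What this buys: the paper's two-step Hensel-style bootstrap never sees $c_2,\dots,c_{k-1}$ and never forms or inverts your scalar, so the ``is it a unit'' question simply does not arise in that form. Your approach is more uniform (one linear solve instead of two passes) and makes the dependence on the evaluation points explicit, but at the cost of the extra verification and a modular inverse. Both arguments share the same residual technicality --- the small denominators in the Lagrange/interpolation coefficients must be absorbed by powers of $B$ --- so your caveat about ``$n$ not tiny'' is equally present, if tacit, in the paper's step from $w_{0,L}$ to $l_0$.
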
 

\begin{proof}
The idea is to avoid the calculation of the lowest significant bits in the binary representation of $b \cdot m$. In detail the calculation of $\bar{b}(0) \bar{m}(0)$ can be replaced using the following algorithm.
\begin{enumerate}
\item Assume $x_0 = 0$ and $w_i = \bar{b}(x_i) \bar{m}(x_i)$ were calculated for all $i > 0$ using $2k -2$ multiplications.
\item Compute $w_{0,L} = -a \pmod{2^{\lceil \frac{n}{2} \rceil}}$. Since $w_0= \bar{b}(0) \bar{m}(0)$ is an integer less than $2^{2\lceil \frac{n}{2} \rceil}$, $w_{0,L}$ is exactly the lower half of the binary representation of $w_0$.
\item Use $w_1, \hdots, w_{2k-2}$, $w_{0,L}$ in order to compute the lower order digits $l_0$ of the linear term of the polynomial $\overline{b \cdot m}$. This works because the full linear term can be obtained from the full representations of $w_0, \hdots, w_{2k-2}$. 
\item Now one can recover the full product $\bar{b}(0) \bar{m}(0) = -a - l_0 2^{\lceil \frac{n}{2} \rceil} \pmod{2^{2\lceil \frac{n}{2} \rceil}}$. 
\end{enumerate}
\end{proof}

Table 2 shows the theoretical relative runtime $\hat{C} M(n)$ of the methods described in Theorems $\ref{opt_schoolbook}$ and $\ref{simple_opt}$ compared to the classical approach combined with \emph{guard bits} or \emph{wooping}. \\

\begin{table}[h] 
\centering
\begin{tabular}{|l|c|c|}
\hline  & $C_\rho$ & $\hat{C}$\\ 
\hline Schoolbook &  0.500 & 0.750\\ 
\hline Karatsuba-Ofman &  0.808 & 0.667\\ 
\hline Toom-Cook-3 &  0.888 & 0.800\\ 
\hline Toom-Cook-4 & 0.923 & 0.857\\ 
\hline 
\end{tabular} 
\label{theoretical_improvements}
\caption{Wooping compared to theorems \ref{opt_schoolbook} and \ref{simple_opt}}
\end{table}

Note that this optimizations makes the computation of our higher half truncated products faster than for the lower half ones. Furthermore from Toom-3 onwards there are more unused bits that may be used for further optimizations. This is one of the subjects of future work on this topic.

\section{ECM on Graphic Cards} \label{ecm_on_gpu}

For the evaluation of our arithmetical ideas we choose the ECM algorithm as a testing application. The ECM algorithm was first described by H.W. Lenstra Jr. in 1987 \cite{lenstra1987factoring} and was later improved by P. L. Montgomery \cite{montgomery1992fft} and many others. A single staged version of the algorithm is stated below.

\begin{algo}{Elliptic Curve Method, Stage 1 \cite{lenstra1987factoring}} ~ \label{ecm_algorithm} \\
Let $n$ be a composite integer and $B_1$ a suitable bound. The following probabilistic algorithm can be used to find a factor of $n$. \begin{enumerate}
\item Calculate the constant $k=\prod\limits_{p \in \mathbb{P}, p \leq B_1} p^{e_p}$ with $p^{e_p} \leq B_1 < e^{e_p +1}$, $e_p \in \mathbb{N}$.
\item Pick a random elliptic curve $E$ over $\mathbb{Z} / n \mathbb{Z}$ and a point $\mathcal{O} \neq P \in E_{\mathbb{Z} / n \mathbb{Z}}$.
\item Calculate $kP$ on $E_{\mathbb{Z} / n \mathbb{Z}}$. If this fails one denominator within the group law formulas is not invertible modulo $n$ and its gcd gives a factor of $n$. If no factor is found return to step 2 and pick a new curve.
\end{enumerate}
\end{algo}

There are several ways to implement this algorithm. One important choice is the model of the used curves. A common way is to use projective curves in Montgomery coordinates \cite{montgomery1987speeding} or Edwards coordinates \cite{bernstein2008edwards}. \\

Although elliptic curves in Edwards coordinates may have a slightly more efficient group law we decided to use Montgomery curves for our first prototype in order to keep compatibility with common implementations, especially the Brent-Suyama parametrization \cite{zimmermann2006ecm20}. Previous implementations of the ECM algorithm on graphic cards by Bernstein et al. used floating point arithmetic \cite{bernstein2008gpuecm} or $24$ bit multiplication \cite{bernstein2009billionmulmod} for building their long integer arithmetic. Although even for our used AMD HD 5000 series graphic cards the $24$ bit performance is roughly 5 times higher than for the ordinary $32$ bit multiplication, we decided to use the latter to build up our arithmetic. This is done because in contrast to the $24$ bit case OpenCL offers an easy to use command to get the higher order bits of an $32$ bit multiplication. Furthermore during a $32$ bit multiplication several additions can be performed in parallel, hence our code is designed to process carry bit calculations of previous multiplications in parallel to our current ones.  \\

The parallelization is straight forward. Every work-item is assigned to a single elliptic curve plus a starting point and has to process the scalar multiplication of the point described in step three of algorithm \ref{ecm_algorithm}. Since we build our program to handle $254$ bit integers -- $8 \cdot 32$ minus the two bits described in section \ref{new_add} -- and every item has to store the curve parameters, three copies of point coordinates and some temporary space, we are using roughly half of the $124$ registers available. Thus if one is careful with scratch space even 510 bit arithmetic should work the same way. Building wider arithmetic is one of the subjects for further improvements of our implementation.  \\

An important fact of this simple parallelization is the absence of need for synchronization between two threads. Even if not crucial, this helps a lot exploiting the maximum modular multiplication per second capabilities of modern graphic cards. Beside the arithmetic described in Section \ref{arithmetic} we use a carry select circuit on the low level. \\

An early version of the program lacking the improvements in Sections \ref{new_add} and \ref{new_mul} and several other improvements won a prize for innovative use of OpenCl assigned by AMD and TopCoder \cite{amd2011coding}.

\section{Experimental Results}
The tests of our implementation have been done on a standard personal computer using a Intel Core 2 Duo E8400 and 4 gigabytes of RAM. During the development we used an AMD Radeon HD5770 graphic card and for final measurements an AMD Radeon HD5870 type card. Note that the HD5770 has exactly half the compute units of the HD5870, thus it was easy to estimate the performance on a high end graphic card without using one. All our tests were performed using the AMD Catalyst Driver Version 11.08 among with the AMD APP SDK 2.5 on Ubuntu 11.10 x64 as our software platform. \\

All tests have been done on a 254 bit module and $B_1$ fixed to $8192$. For a better comparison to previous attempts to implement the ECM algorithm we scaled the results by $(\frac{254}{192})^2$. Table 4 shows this scaled results as well as the prize / performance ratio for our implementations. In order to compare the prize / performance ratio it is necessary to know that the previous records were achieved by using a 500\$ NVidia Geforce GTX295 graphics card while our HD5870 has an average market prize of 320\$ as of January 2012. 

\begin{table}[h]
\centering
\begin{tabular}{|c|c|c|}
\hline  & MulMod$\cdot 10^6$ / Sec & MulMod$\cdot 10^6$ / (Sec $\cdot \;$ \$) \\ 
\hline Previous Record \cite{bernstein2009billionmulmod} & 575 & 1.15 \\ 
\hline AMD Challenge Version  \cite{amd2011coding} & 430.8 & 1.34 \\ 
\hline Recent Optimized Version & 846.2 & 2.64 \\ 
\hline 
\end{tabular} 
\caption{192 bit MulMod / sec in our implementations}
\vspace{-1em}
\end{table}

Note that our implementation currently is able to handle 3756 scalar multiplications per second on an elliptic curve over a 254 bit modulus. Scaled to 192 bit these are roughly 6575 curves per second. Although this also is a new record for ECM on a single-chip graphics card it is not as much as it may be. The program of Bernstein et. al. \cite{bernstein2009billionmulmod} is capable of processing 4928 scalar multiplications per second by much less modular multiplications. We believe this gap occurs since we do not multiply out our module and do not use the arithmetical advantages of Edwards curves yet. \\

In order to evaluate the impact of our observations described in Sections \ref{new_add} and \ref{new_mul} we ran about 51.200 scalar multiplications with $B_1 = 8192$ and on a 254 bit modulus with OpenCL kernels without these improvements. Note that we used an AMD Radeon HD 5770 graphic card for our experiments that has exactly half the computational capabilities of the earlier described HD 5870 model.

\begin{table}[h]
\centering
\begin{tabular}{|c|c|c|c|}
\hline  & Curves / Sec & MulMod$\cdot 10^6$ / Sec & Ratio \\ 
\hline Without Optimizations & 1691.5 & 217.7 & 100 \%  \\ 
\hline Section \ref{new_add} only & 1744.6 & 224.6 & 103,1 \%  \\ 
\hline Section \ref{new_mul} only & 1818.6 & 234.1 & 107,6 \%  \\ 
\hline Fully Optimized & 1880.1 & 242.0 & 111,2 \% \\ 
\hline 
\end{tabular} 
\caption{254 bit MulMod / sec for different optimization stages on HD 5770}
\vspace{-1em}
\end{table} 
We see that our observations can deliver up to 11\% more arithmetical throughput for free, while being easy to implement. \\

By our experimental results we also see that even on a single chip graphics cards the limit of one billion modular multiplications for 192 bit modulus is in reach. To be more precise using two cards of type AMD Radeon HD 5970 -- that is two HD 5870 typed chips glued together at slightly lower clock rate -- the bar can be raised to about 3 billion modular multiplications per second. Setting up this kind of record breaking system costs two times 700\$ for the two graphic cards and additionally a few hundred \$ for the peripheral components. In total a total cost of 2000\$ should not be exceed. \\

This level of computational throughput is already in the range of being relevant for modern security. For example the elliptic curve discrete logarithm problem with a 130 bit module -- as used in one of the Certicom ECC challenges -- was believed to be infeasible for long. Using our 400 of our described 3 billion modular multiplication per second computers, this challenge is theoretically in range to be broken within a year under common runtime assumptions \cite{lenstra2000selecting}. \\

Summarizing SIMD devices offer a lot of computational potential and significance whereas there is still room left for more technical and algorithmical improvements.

\bibliography{biblio}{}

\begin{thebibliography}{10}
\expandafter\ifx\csname url\endcsname\relax
  \def\url#1{\texttt{#1}}\fi
\expandafter\ifx\csname urlprefix\endcsname\relax\def\urlprefix{URL }\fi
\expandafter\ifx\csname href\endcsname\relax
  \def\href#1#2{#2} \def\path#1{#1}\fi

\bibitem{bernstein2008gpuecm}
D.~J. Bernstein, T.-R. Chen, C.-M. Cheng, T.~Lange, B.-Y. Yang, {ECM on
  Graphics Cards}, Cryptology ePrint Archive, Report 2008/480 (2008).

\bibitem{bernstein2009billionmulmod}
H.-C. C. M.-S. C. C.-M. C. C.-H. H. T. L. Z.-C. L. B.-Y.~Y. {Daniel J.
  Bernstein}, {The billion-mulmod-per-second PC}, Workshop record of SHARCS'09:
  Special-purpose Hardware for Attacking Cryptographic Systems (2009).

\bibitem{khronos2008opencl}
K.~Group, {The Khronos Group Releases OpenCL 1.0 Specification},
  http://www.khronos.org/news/press/the\_khronos\_group\_releases\_opencl\_1.0\_specification
  (2008).

\bibitem{khronos2011opencl}
K.~Group, {Khronos Releases OpenCL 1.2 Specification},
  http://www.khronos.org/news/press/khronos-releases-opencl-1.2-specification
  (2008).

\bibitem{gaster2011heterogeneous}
B.~Gaster, D.~Kaeli, L.~Howes, P.~Mistry, {Heterogeneous Computing with
  OpenCL}, Morgan Kaufmann Pub, 2011.

\bibitem{amd2011guide}
A.~M. Devices, {AMD Accelerated Parallel Processing Programming Guide, v1.3f},
  http://developer.amd.com/sdks/AMDAPPSDK/assets/AMD\_Accelerated\_Parallel\_Process-ing\_OpenCL\_Programming\_Guide.pdf
  (2011).

\bibitem{barrett1987implementing}
P.~Barrett, {Implementing the Rivest Shamir and Adleman public key encryption
  algorithm on a standard digital signal processor}, in: {Advances in
  Cryptology{\^a}CRYPTO{\^a}86}, Springer, 1987, pp. 311--323.

\bibitem{montgomery1985modular}
P.~Montgomery, {Modular multiplication without trial division}, Mathematics of
  computation 44~(170) (1985) 519--521.

\bibitem{karatsuba1963multiplication}
A.~Karatsuba, Y.~Ofman, {Multiplication of multidigit numbers on automata}, in:
  {Soviet physics doklady}, Vol.~7, 1963, p. 595.

\bibitem{cook1966minimum}
S.~Cook, {On the minimum computation time for multiplication}, Doctoral diss.,
  Harvard U., Cambridge, Mass.

\bibitem{schoenhage1982asymptotically}
A.~Sch{\"o}nhage, {Asymptotically fast algorithms for the numerical
  muitiplication and division of polynomials with complex coefficients},
  Computer Algebra (1982) 3--15.

\bibitem{mulders97oncomputing}
T.~Mulders, {On Computing Short Products}, Tech. rep. (1997).

\bibitem{hars2005fast}
L.~Hars, {Fast truncated multiplication for cryptographic applications},
  Cryptographic Hardware and Embedded Systems--CHES 2005 (2005) 211--225.

\bibitem{bentahar2007efficient}
K.~Bentahar, N.~Smart, {Efficient 15,360-bit RSA using woop-optimised
  montgomery arithmetic}, Cryptography and Coding (2007) 346--363.

\bibitem{lenstra1987factoring}
H.~W. {Lenstra Jr.}, {Factoring with Elliptic Curves}, Annals of Mathematics
  126 (1987) 649--673.

\bibitem{montgomery1992fft}
P.~Montgomery, {An FFT extension of the elliptic curve method of
  factorization}, Ph.D. thesis, UNIVERSITY OF CALIFORNIA Los Angeles (1992).

\bibitem{montgomery1987speeding}
P.~Montgomery, {Speeding the Pollard and elliptic curve methods of
  factorization}, Mathematics of computation 48~(177) (1987) 243--264.

\bibitem{bernstein2008edwards}
D.~J. Bernstein, P.~Birkner, T.~Lange, C.~Peters,
  \href{http://eprint.iacr.org/2008/016}{{ECM using Edwards curves}}, IACR
  Cryptology ePrint Archive 2008 (2008) 16.
\newline\urlprefix\url{http://eprint.iacr.org/2008/016}

\bibitem{zimmermann2006ecm20}
P.~Zimmermann, B.~Dodson, {20 years of ECM.}, {Hess, Florian (ed.) et al.,
  Algorithmic number theory. 7th international symposium, ANTS-VII, Berlin,
  Germany, July 23--28, 2006. Proceedings. Berlin: Springer. Lecture Notes in
  Computer Science 4076, 525-542 (2006).} (2006).
\newblock \href {http://dx.doi.org/10.1007/11792086}
  {\path{doi:10.1007/11792086}}.

\bibitem{amd2011coding}
A.~M. Devices, {AMD OpenCL Coding Competition},
  http://community.topcoder.com/amdapp/ (2011).

\bibitem{lenstra2000selecting}
A.~K. Lenstra, E.~R. Verheul, {Selecting Cryptographic Key Sizes.}, in:
  H.~Imai, Y.~Zheng (Eds.), {Public Key Cryptography}, Vol. 1751 of {Lecture
  Notes in Computer Science}, Springer, 2000, pp. 446--465.

\end{thebibliography}
\bibliographystyle{elsarticle-num}

\end{document}